\newtheorem{theorem}{Theorem}[section]
\theoremstyle{definition}
\newcommand{\be}{\begin{equation}}
\newcommand{\ee}{\end{equation}}
\newcommand{\Rt}{\mathbb{R}^3}
\newcommand{\RA}{\mathcal{R}_A}
\newcommand{\RC}{\mathcal{R}_C}
\newcommand{\R}{\mathcal{R}}
\title{Comments on Penrose inequality with angular momentum for outermost apparent horizons.}
\author{Pablo Anglada \footnote{panglada@famaf.unc.edu.ar}\\
  Facultad de Matem\'atica, Astronom\'{i}a, F\'{i}sica, y Computaci\'on  \\
     Universidad Nacional de C\'ordoba, \\
Instituto de F\'{i}sica Enrique Gaviola, IFEG, CONICET,\\
  Ciudad Universitaria (5000) C\'ordoba, Argentina.}
\begin{document}
\maketitle
\begin{abstract}
In a recent work
we have proved a weaker version of the Penrose inequality with angular momentum, in axially symmetric space-times, for a compact and connected minimal surface. In this previous work we use the monotonicity of Geroch energy on 2-surfaces along the inverse mean curvature flow and we obtain a lower bound for the ADM mass in terms of the area, the angular momentun and a particular measure of size of the minimal surface. 
In the present work, using similar techniques and the same measure of size, we extend and improve the previous result for a compact and connected outermost apparent horizon. For this case we use the monotonicity of Hawking energy, instead of Geroch energy, along the inverse mean curvature flow, and assume different conditions on the extrinsic curvature. This type of relations constitutes an important test to evaluate the cosmic censorship conjecture.

\end{abstract}

\section{Introduction}
 
After formulating the \textit{cosmic censorship conjecture},
Penrose proposed \cite{Penrose1973} that, when considering
collapsing matter,  if the conjecture is valid, the mass $m$ and the area $A$ of the resulting black hole must satisfy the relation:

\be\label{penroseineq}
m\geq \sqrt{\frac{A}{16\pi}}
\ee

Given that the validity of this relation gives one of the most important tests to evaluate the cosmic censorship conjecture, there are plenty of works that study this relation, and one can find very exhaustive review articles \cite{Mars2009, Bray2003} as well as general approaches to the problem  \cite{Frauendiener:01, BrayKhuri:11, BrayKhuri:09}. Moreover, since the Penrose heuristic argument one can strengthen the to include charge and angular momentum and charge (see \cite{Dain:2014Geo} \cite{Dain:2012} \cite{Mars2009} for more details), this topic has become an active area of research.
Good progress has been made in considering the case of a charged black hole without angular momentum \cite{Weinstein:2005,Disconzi:2012, KhuriYamada:2013, Khuri:2013, Khuri:2014, Khuri:2015}, but regarding the case with angular momentum, there are only a few results exploring the relation between the angular momentum, the size and the mass of a compact object \cite{Anglada:2016dbu, Anglada:2017ryp, Jaracz-Khuri18}. For further details and references regarding geometrical inequalities bounding angular momentum see the review article \cite{Dain:2017jkj}. In this work we present an extension of our previous work \cite{Anglada:2017ryp} for compact and connected general horizons.

Take an axially symmetric initial data $M$ connecting the black hole region with spatial infinity , such that the collapse has already occurred, and calculate the mass $m$, the area $A$, and the angular momentum $J$ of the black hole. Then, from the Penrose heuristic argument for rotating black holes, see for example \cite{Anglada:2017ryp}, we expect that:
\be \label{PenJ}
m^2 \geq \frac{A }{16 \pi} + \frac{4 \pi J^2}{A}
\ee
Note that this version of the Penrose inequality admits a rigidity case which states that the equality can only occur for the Kerr black hole.

In \cite{Anglada:2017ryp} we studied this problem in the particular case that the apparent horizon is a compact and connected minimal surface.
We use the monotonicity properties of the Geroch energy \cite{Geroch1973} along the IMCF and proved the following version of \eqref{PenJ}:
\be\label{PenJms}
m_{ADM}^2\geq \frac{A }{16 \pi} +  \frac{ J^2}{ 2\R^2}
\ee
where $m_{ADM}$ is the ADM mass, \cite{ADM}, and $\R$ is a specific measure of size defined in terms of the norm of the axial Killing vector. This measure has reasonably nice properties, see \cite{Anglada:2016dbu, Anglada:2017ryp}, and under certain conditions can be related to usual measures of size.
In this previous work, in order to have a non-negativity scalar curvature we need to assume some special conditions for the extrinsic curvature, for example that the initial data is maximal.
In this work we use Hawking energy, instead of Geroch energy, similar techniques and the same measure of size to extend and improve \eqref{PenJms} for a compact and connected general horizon. 

Regarding the mentioned papers, \cite{Anglada:2016dbu, Anglada:2017ryp, Jaracz-Khuri18}, that explore the relation between the angular momentum, the size and the mass of a compact object it is important to remark the following issue. Although the technique used to relate the angular momentum with the surface integral of the extrinsic curvature are very similar, and the particular measure of the axial radii is the same, in \cite{Anglada:2016dbu}, \cite{ Anglada:2017ryp}, and in the present work, one need to impose some particular conditions over the extrinsic curvature in order to assure that the energy is non-decreasing along the flow, and also one need to assume there exist a smooth solution of the IMCF for the initial data. In \cite{Jaracz-Khuri18} the author present a different approach to this problem, they use an embellished version of the Jang equation and they study a Jang/IMCF system of equations. This allows the authors to obtain a result very similar to the one presented on this work but assuming only a few general conditions, but they need assume that there exist a smooth solution of a more complex system of equations, where the IMCF equation and a particular version of Jang equation are coupled.

\section{Background}
\label{Background}

We consider an \textit{asymptotically flat} and \textit{axially symmetric} initial data set $(M, \partial M,\bar g, K; \mu, j^i)$ with boundary $\partial M$, where $M$ is a 3-manifold with positive definite metric $\bar g$ and extrinsic curvature $K$, $\partial M$ is a connected and compact 2-surface, $\mu$ is the energy density and $j^i$ is the matter current density. This set must satisfy the constraint equations 
\begin{align}
 \label{const1}
   \bar D_j   K^{i j} -  \bar D^i   k= -8\pi j^i,\\
 \label{const2}
   \bar R -  K_{i j}   K^{i j}+  k^2=16\pi \mu,
\end{align}
where ${\bar D}$ and $\bar R$ are the Levi-Civita connection and the curvature scalar associated with $  \bar g$, and $k=\mbox{tr}_{\bar g}K$. 
We assume the matter fields satisfy the \textit{Dominant Energy Condition} (DEC), $\mu \geq |j|$,  and that $\partial M$ is an outermost future apparent horizon and there are no other trapped surfaces on $M$. With these assumptions $M$ is an \textit{exterior region} and has the topology $\Rt$ minus a ball \cite{Huisken2001}.

Assume there exists a smooth inverse mean curvature flow (IMCF) of surfaces $S_t$ starting from $S_{0}=\partial M$ and having spherical topology, and take $\nu$ to be unit normal vector of $S_t$ (see equation \eqref{eqIMCF} on Appendix A). Then one can write the metric $\bar g$ in the form:
\be\label{barg}
ds^2_{\bar g}=\frac{dt^2}{H^2} + g_{ij}dx^i dx^j
\ee
where $H$ and $g_{ij}$ are the mean curvature and the induced metric of $S_t$ respectively, and $(x^1, x^2)$ are and the induced coordinates.
See \cite{Huiskenevol} \cite{Szabados04} for a review of the basic properties of the IMCF.
In this context the extrinsic curvature can be decomposed \cite{Malec:2002ki}:
\be\label{K}
K_{ij}=z\nu_i \nu_j + \nu_i s_j + s_i \nu_j  + g^k_i g^l_j \chi_{lk} + \frac{q}{2}g_{ij}
\ee
where q is the trace with respect to $g_{ij}$ of $K$, $q=K_{ij}g^{ij}$ and 
\be\label{Kparts}
z=K_{ij}\nu^i \nu^j \qquad s_i=g^j_i K_{jl} \nu^l \qquad \chi_{ij}=g^l_i g^n_j K_{ln} - \frac{q}{2} g_{ij}
\ee
then the trace of the extrinsic curvature takes the form $k=\mbox{tr}(K)=z+q$ and its norm is
\be\label{KK}
K_{ij}K^{ij}=z^2+2s_i s^i + \chi_{ij} \chi^{ij} + \frac{q^2}{2}.
\ee
and then from equation \eqref{const2} we have that the scalar curvature of the initial data can be written in the following way:
\begin{equation}
\label{barR}
\bar  R= 16\pi \mu + 2s_i s^i + \chi_{ij} \chi^{ij} -  \frac{q}{2}(q + 4z).
\end{equation}

Let $\vartheta^+$ and $ \vartheta^-$ be the expansions of the outgoing null geodesics orthogonal to $S_t$, future directed and past directed respectively, then because $\partial M$ is a future apparent horizon we have $\vartheta^+|_{\partial M}=0$ and $\vartheta^-|_{\partial M}\geq0$.
From the previous decomposition $ \vartheta^+|_{S_t}=H+q$ and  $\vartheta^-|_{S_t}=H-q$ and then
if $M$ has no other trapped surface than $\partial M$, the expansions satisfy satisfy $\vartheta^+|_{S_t} \geq 0$, $\vartheta^-|_{S_t}\geq 0$ $\forall t >0$, and then 
\be\label{NTS}
(\vartheta^+\vartheta^-)|_{S_t}=H^2-q^2>0 \quad \forall t > 0,
\ee

Following \cite{Malec:2002ki} we are going to use a functional proposed by Hawking \cite{Hawking:1968qt},the Hawking energy of a surface $E_H(S)$:
\be \label{Hmass-gral}
E_H(S):=\frac{A^{1/2}}{(16\pi)^{3/2}}\left(16\pi-\int_{S}\vartheta^+\vartheta^-dS\right)
\ee
where $A$ is the area of $S$. 
This energy, under certain conditions (see \cite{Malec:2002ki}) is monotonic under a smooth inverse mean curvature flow, and has the interesting properties that it tends to the ADM mass of $M$ at infinity and for a future apparent horizon is equal to $\sqrt{\frac{A}{16 \pi}}$. 
From \cite{Malec:2002ki} we have that the derivate of Hawking energy along the IMCF can be written in the following way (see appendix \ref{sec:IMCF} for more details and definitions of used quantities): 
\begin{multline}
\label{dtEgralf}
\frac{d}{dt}E_H=\frac{A_t^{1/2}}{(16\pi)^{3/2}}\left[ \int_{S_t} \left(16 \pi ( \mu +\frac{q}{H}\nu^i j_i)  \right) + \int_{S_t} 2 \frac{q}{H} g^{ij} \bar \nabla_i s_j dS
\right.\\
\left.              +\int_{S_t} \left(\chi_{ij} \chi^{ij} -  2\frac{q}{H}\chi^{ij} t_{ij}  + t_{ij}   t^{ij}\right)dS \right.\\
\left.              +\int_{S_t} 2\left(s_i s^i - 2 \frac{q}{H} s^j \frac{ \bar \nabla_j H}{H} + \frac{ g^{ij} \bar \nabla_i H \bar \nabla_j H}{H^2} \right)dS \right]
\end{multline}

\section{Main result}

Now from \cite{Anglada:2016dbu} we know that when considering the IMCF in axially symmetric initial data, 
the IMCF equation preserves axial symmetry. Then from now on, when we discuss the IMCF flow, we always consider it consisting of axially symmetric surfaces $S_t$. Then for each surface of the flow we can define orthogonal coordinates $\theta, \varphi$ such that $\eta^i=\frac{\partial}{\partial \varphi}^i$. One can always choose this for axially symmetric 2-surfaces that are diffeomorphic to $S^2$, see for example
\cite{Dain:2011pi}. Hence we have:
\be \label{gSt}
  ds^2_g=\Psi^4 d\theta^2+ \eta d\varphi^2
\ee
where $\eta=g_{ij}\eta^i \eta^j$ is the square norm of the axial Killing vector. 

The physical and geometrical quantities we are interested in are the ADM mass $m_{ADM}$ and the Komar angular momentum $J(S_t)$: 
\be\label{defangmom}
 J(S_t)=\frac{1}{8\pi}\int_{S_t} K_{ij} \eta^i \nu^jdS,
\ee
where we use that $\bar g_{ij} \nu^i \eta^j=0$.

To measure the size of the surface $S_t$ we will use the areal and circumferential radii of a surface $S_t$ in $M$:
\be\label{size}
\RA(S_t):=\sqrt{\frac{A_t}{4\pi}},\qquad \RC(S_t):=\frac{\mathcal C (S_t)}{2\pi}
\ee
where $A_t$ is the area of $S_t$ and $\mathcal C(S_t)$ is the length of the greatest axially symmetric circle in $S_t$.
It is also useful to consider the following size measure studied in \cite{Anglada:2016dbu, Anglada:2017ryp}:
\be\label{R}
\frac{1}{\R(S_t)^2} := A_t^{1/2} \int_{t}^\infty \frac{A_{t'}^{1/2}}{\int_{S_{t'}}  \eta dS }dt'
\ee

This measure of size of a surface $S_t$, based on the behavior of the norm of the Killing vector along the IMCF from $S_t$ to infinity, is positive and well defined provided the flow remains smooth. Moreover, as shown in \cite{Anglada:2017ryp}, in some cases, $\R$ can be related to $R_A$ and  $R_C$. In particular assuming that the IMCF is convex we have:
\be
\R^2(S_t) \leq \frac{5}{2}\frac{\int_{S_t} \eta dS}{A_t} \leq \frac{5}{2} \RC^2(S_t),
\ee

Using the previous tools and this definition of size, and assuming the same conditions of the main theorem in \cite{Malec:2002ki}, we prove the following theorem.

\begin{theorem}
\label{theo1}
Let $(M,\partial M, \bar g, K)$ be a vacuum, asymptotically flat, and axially symmetric initial data, such that $\partial M$ is a compact and connected outermost apparent horizon and there are no other trapped surfaces on $M$. Assume there exists a smooth IMCF of surfaces $S_t$ starting from $\partial M$ and having spherical topology, then if the initial data satisfies either:
\begin{enumerate}[a)]
        \item \label{a} $\frac{q}{H}$ is constant for each surface $S_t$ , or 
        \item \label{b} $g^{ij} \bar \nabla_i s_j=0$,
\end{enumerate}
then:
\be\label{teo1}
m_{ADM}^2 \geq \frac{A}{16 \pi} + \frac{ J^2}{\R^2}
\ee
where $J$ and $A$ are the angular momentum and the area of $\partial M$ respectively, and $\R=\R(\partial M)$ is defined by \eqref{R}.
\end{theorem}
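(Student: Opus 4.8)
The plan is to integrate the Hawking energy monotonicity formula \eqref{dtEgralf} along the flow, discard the manifestly nonnegative terms while retaining exactly the piece that carries the angular momentum, and then feed the resulting pointwise bound into $\frac{d}{dt}E_H^2=2E_H\frac{d}{dt}E_H$ so that the definition \eqref{R} of $\R$ reproduces the quadratic Penrose term. First I would specialize \eqref{dtEgralf} to the vacuum case, so that $\mu=0$, $j^i=0$ and the first integral drops. Under hypothesis \ref{a} the factor $q/H$ is constant on $S_t$, so $\int_{S_t}2\frac{q}{H}g^{ij}\bar\nabla_is_j\,dS=2\frac{q}{H}\int_{S_t}g^{ij}\bar\nabla_is_j\,dS$ vanishes by the divergence theorem on the closed surface; under hypothesis \ref{b} the integrand vanishes pointwise. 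In either case this divergence term, which has no definite sign, is eliminated. The two remaining quadratic forms, $\chi_{ij}\chi^{ij}-2\frac{q}{H}\chi^{ij}t_{ij}+t_{ij}t^{ij}$ and $s_is^i-2\frac{q}{H}s^iv_i+v_iv^i$ with $v_i:=\bar\nabla_iH/H$, are both nonnegative: completing the square and using $q^2<H^2$ on $S_t$ for $t>0$, which is exactly the no--trapped--surface condition \eqref{NTS}, each is a sum of squared--norm terms with nonnegative coefficients. I would discard the $\chi$--form entirely and keep the $s$--form.

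The hard part, and the step where the bound could degrade, is extracting a clean lower bound on $\int_{S_t}(s_is^i-2\frac{q}{H}s^iv_i+v_iv^i)\,dS$ in terms of $J$. Completing the square in $v$ naively produces the factor $(1-q^2/H^2)\le1$ in front of $s_is^i$, which would weaken the final constant. The resolution is axial symmetry: since $H$ is axisymmetric, $\eta^i\bar\nabla_iH=0$, so $v_i$ is orthogonal to the Killing field $\eta$ and, in the orthogonal coordinates of \eqref{gSt}, has only a $\theta$--component. Writing $s_is^i=\Psi^{-4}s_\theta^2+\eta^{-1}s_\varphi^2$ and $s_i\eta^i=s_\varphi$, the cross term $-2\frac{q}{H}s^iv_i$ involves only $s_\theta$, so
\begin{equation}
s_is^i-2\tfrac{q}{H}s^iv_i+v_iv^i=\Psi^{-4}\Big[\big(s_\theta-\tfrac{q}{H}v_\theta\big)^2+\big(1-\tfrac{q^2}{H^2}\big)v_\theta^2\Big]+\frac{s_\varphi^2}{\eta}\ \geq\ \frac{s_\varphi^2}{\eta},
\end{equation}
with unit coefficient on the Killing--direction part; the entire $(1-q^2/H^2)$ loss is absorbed into the discarded $\theta$--sector. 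A Cauchy--Schwarz estimate then gives $\int_{S_t}\eta^{-1}s_\varphi^2\,dS\ge(\int_{S_t}s_\varphi\,dS)^2/\int_{S_t}\eta\,dS=64\pi^2 J(S_t)^2/\int_{S_t}\eta\,dS$, where I use $J(S_t)=\frac{1}{8\pi}\int_{S_t}s_\varphi\,dS$ from \eqref{defangmom}. Finally I would invoke conservation of the Komar angular momentum along the flow, $J(S_t)=J$, which follows from the vacuum momentum constraint \eqref{const1} together with the Killing equation for $\eta$.

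To assemble, I combine the above into $\frac{d}{dt}E_H\ge\frac{A_t^{1/2}}{(16\pi)^{3/2}}\cdot\frac{128\pi^2J^2}{\int_{S_t}\eta\,dS}$ and multiply by $2E_H$. Since $\frac{d}{dt}E_H\ge0$ the energy is nondecreasing, and $E_H(\partial M)=\sqrt{A/16\pi}$ gives $E_H(S_t)\ge A^{1/2}/(16\pi)^{1/2}$ for all $t$; substituting this lower bound makes the numerical constants collapse to $1$, yielding
\begin{equation}
\frac{d}{dt}E_H^2=2E_H\frac{d}{dt}E_H\ \geq\ J^2\,A^{1/2}\,\frac{A_t^{1/2}}{\int_{S_t}\eta\,dS}.
\end{equation}
Integrating from $t=0$ to $\infty$, using $E_H\to m_{ADM}$ at infinity and $E_H(\partial M)^2=A/16\pi$, the right--hand side integrates, by the definition \eqref{R} of $\R=\R(\partial M)$, to exactly $J^2/\R^2$, which is \eqref{teo1}. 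I expect the only genuinely delicate point to be the component decomposition above: it is what lets the axial symmetry convert the indefinite $q/H$ cross term into a harmless contribution and preserve the sharp constant, and it is precisely the structural improvement over the Geroch--energy argument that upgrades $J^2/(2\R^2)$ to $J^2/\R^2$.
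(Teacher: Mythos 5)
Your proposal is correct and follows essentially the same route as the paper: kill the divergence term via condition (a) (Stokes) or (b) (pointwise), split the $s$--quadratic form into a nonnegative $\theta$--sector plus the pure $s_\varphi s^\varphi$ piece using axisymmetry of $H$, bound the latter by $(8\pi)^2J^2/\int_{S_t}\eta\,dS$ via Cauchy--Schwarz, and then integrate $\frac{d}{dt}E_H^2\geq J^2\sqrt{A}\,A_t^{1/2}/\int_{S_t}\eta\,dS$ against the definition of $\R$. The only cosmetic difference is that you complete the square explicitly in the $\theta$--sector where the paper simply observes that $|H/q|\geq 1$ makes the quadratic forms positive definite.
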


\begin{proof}

From the fact that the surfaces of the flow are axially symmetric we have that mean curvature $H$ does not depend on the coordinate $\varphi$:
\be
s^j \bar \nabla_j H = s^\theta \bar \nabla_\theta H \qquad  g^{ij} \bar \nabla_i H \bar \nabla_j H= g^{\theta \theta} \bar \nabla_\theta H \bar \nabla_\theta H
\ee
hence in this case:
\begin{multline}
\frac{d}{dt}E_H=\frac{A_t^{1/2}}{(16\pi)^{3/2}}\left[ \int_{S_t} 2 \frac{q}{H} g^{ij} \bar \nabla_i s_j dS 
+\int_{S_t} \left(\chi_{ij} \chi^{ij} -  2\frac{q}{H}\chi^{ij} t_{ij}  + t_{ij}   t^{ij}\right)dS
\right.\\ 
\left.    +\int_{S_t} 2\left(s_\theta s^\theta - 2 \frac{q}{H^2} s^\theta  \bar \nabla_\theta H + \frac{1}{H^2} g^{\theta\theta} \bar \nabla_\theta H \bar \nabla_\theta H  \right)dS \right] + 2\frac{A_t^{1/2}}{(16\pi)^{3/2}}\int_{S_t}  s_\varphi s^\varphi dS
\end{multline}
where we have used that the initial data is vacuum.

In order to include the angular momentum into the inequality we know from \cite{Anglada:2017ryp} that we can relate the angular momentum of any surface $S_t$ to the surface integral of the norm of $s_i$ and the norm of $\eta^i$.  In this work we need to improve the previous calculation in order to relate the angular momentum, not with the norm of $s_i$, but only with the component of $s_i$ along the axial Killing vector. First note that $K_{ij} \eta^i \nu^j=s_i \eta^i=s_\varphi$, then using the Cauchy-Schwarz inequality in the definition of $J_t:=J(S_t)$:  
\be\label{angmom}
\begin{split}
 J_t^2&=\frac{1}{(8\pi)^2} \left( \int_{S_t} s_i \eta^i dS\right)^2 = \frac{1}{(8\pi)^2} \left( \int_{S_t} s_\varphi dS\right)^2 \\
 &\leq \frac{1}{(8\pi)^2} \left( \int_{S_t} |s_\varphi| \frac{\sqrt{ \eta}}{\sqrt{\eta}} dS\right)^2 \leq \frac{1}{(8\pi)^2} \int_{S_t} \frac{s_\varphi^2}{\eta}dS  \int_{S_t}\eta dS \\
 &= \frac{1}{(8\pi)^2} \int_{S_t} s_\varphi s^\varphi dS  \int_{S_t}\eta dS
\end{split}
\ee
where in the fourth step we have used the H\"older inequality with $p_1=p_2=2$. Then we have:
\be
\int_{S_t} s_\varphi s^\varphi dS \geq \frac{(8\pi)^2 J_t^2}{\int_{S_t}\eta dS}
\ee 
and hence we could include explicitly the angular momentum on the derivate of Hawking energy:

\begin{multline}\label{dtEvsJgral}
\frac{d}{dt}E_H \geq\frac{A_t^{1/2}}{(16\pi)^{3/2}}\left[ \int_{S_t} 2 \frac{q}{H} g^{ij} \bar \nabla_i s_j dS 
+\int_{S_t} \left(\chi_{ij} \chi^{ij} -  2\frac{q}{H}\chi^{ij} t_{ij}  + t_{ij}   t^{ij}\right)dS
\right.\\ 
\left.    +\int_{S_t} 2\left(s_\theta s^\theta - 2 \frac{q}{H^2} s^\theta  \bar \nabla_\theta H + \frac{1}{H^2} g^{\theta\theta} \bar \nabla_\theta H \bar \nabla_\theta H \right)dS \right] +2\sqrt{\pi} J_t^2 \frac{A_t^{1/2}}{\int_{S_t}\eta dS}
\end{multline}

Now first note that from the hypothesis that there are no other trapped surfaces in $M$ than $\partial M$ we have that $H^2>q^2$ and thus $|\frac{H}{q}| \geq 1$, hence the integrands on the second  and third terms $$\left(\chi_{ij} \chi^{ij} -  2\frac{q}{H}\chi^{ij} t_{ij}  + t_{ij}   t^{ij}\right), \quad \left( s_\theta s^\theta - 2 \frac{q}{H^2} s^\theta  \bar \nabla_\theta H + \frac{1}{H^2} g^{\theta\theta} \bar \nabla_\theta H \bar \nabla_\theta H\right) $$ are positive quadratic forms, and thus the second and third integrals in \eqref{dtEvsJgral} are positive. Then note that if we assume condition \ref{b} the integrand on the first term in \eqref{dtEvsJgral} is equal to zero, and if we assume condition \ref{a} then by partial integration on $S_t$ the first term in \eqref{dtEvsJgral} also vanishes.

Thus assuming the hypothesis of the theorem and either of the conditions \ref{a} or \ref{b}, the first term in \eqref{dtEvsJgral} vanishes and the second and third terms are positive, hence:
\be\label{dtEvsJ}
\frac{d}{dt}E_H\geq \sqrt{4\pi} J^2 \frac{A_t^{1/2}}{\int_{S_t}\eta dS}
\ee
where we have used that $M$ is a vacuum exterior region, thus $J_t=J$.

From these arguments we have that $E_H$ is monotonically increasing along the flow, hence $E_H(S_t)\geq E_H(\partial M) \quad \forall t\geq 0 $, then because $\partial M$ is an apparent horizon $H^2=q^2$ we have $E_H(\partial M)=\sqrt{\frac{A}{16\pi}}$ and thus:
\be
\label{Penrose}
E_H(S_t)\geq \sqrt{\frac{A}{16\pi}} \quad \forall t\geq 0 
\ee

Now we calculate the derivate along the flow of the functional $E_H^2$ and use equation \eqref{dtEvsJ} to obtain a lower bound for it in terms of $J$:
\be
\frac{d}{dt}E_H^2 =2E_H(S_t) \frac{d}{dt}E_H  \geq   2E_H(S_t) \sqrt{4\pi} J^2 \frac{A_t^{1/2}}{\int_{S_t}\eta dS}
\ee
then using equation \eqref{Penrose} we have:
\be\label{dtE2vsJ}
\frac{d}{dt}E_H^2  \geq    J^2 \sqrt{A} \frac{A_t^{1/2}}{\int_{S_t}\eta dS}
\ee
Now integrating this expression along the flow from $\partial M$ to infinity and using the relation between Hawking energy and the ADM mass we obtain:
\be\label{evol4}
m_{ADM}^2\geq\lim_{t \to \infty} E_H^2(S_t) \geq E_H^2(S_0) + J^2 \sqrt{A} \int_0^\infty \frac{A_t^{1/2}}{\int_{S_t}  \eta dS}dt.
\ee 

Finally we use the fact that $E_H^2(S_0)=\frac{A}{16\pi}$, and we write this expression in terms of $\R$ and obtain \eqref{teo1}.

\end{proof}

Inequality \eqref{teo1} is also valid for non-vacuum initial data, provided that the matter fields satisfy the DEC and that $j_i\eta^i=0$ everywhere in $M$.
Assuming the DEC we assure that Hawking energy remains monotonic for non-vacuum initial data. Condition $j_i\eta^i=0$ assures that the angular momentum is preserved along the flow $J(S_t)=J$, and that there is no contribution to $J$ coming from the matter fields $J=J(\partial M)$.

In case we have a non-zero contribution of the matter fields to the angular momentum, $j_i\eta^i \neq0$, we obtain an extension, for objects that contain a general horizon, of the results for ordinary objects presented in \cite{Anglada:2016dbu}. We assume that the matters fields satisfy the DEC and that both the matter density and the matter current have compact support.
In this case the angular momentum of a surface $S_t$ is
\be
 J(S_t)=\frac{1}{8\pi}\int_{S_t} K_{ij} \eta^i \nu^jdS=  J(\partial M) - \int_{V(S_t)} j_{i} \eta^i dv,
\ee
where $V(S_t)$ is the region enclosed between $\partial M$ and $S_t$. Thus the conservation of the angular momentum along the flow  is only satisfied when the surfaces $S_t$ are outside the compact support of the matter fields. Then the measures of size involved in the rotational contribution to the energy are not measures of size of the apparent horizon, but measures of size of the first surface of the flow $S_T$ that enclosed the object.

Then for a non-vacuum initial data with $j_i\eta^i \neq0$, assuming the same conditions of theorem \eqref{theo1}, we obtain the following result.

\begin{theorem}

Let $(M,\partial M, \bar g, K; \mu, j^i)$ be an initial data satisfying the same conditions of theorem  \ref{theo1}. Assume the matter fields satisfy the dominant energy condition and have compact support, and let $T$ such that for all $t\geq T$ the matter density and the matter current have compact support inside $S_t$, then:

\be\label{teo2}
m_{ADM} \geq  m_T + \frac{\R_A}{2} + \frac{ J^2}{\R_A(T)\R^2(T)}
\ee
where $J$ is the total angular momentum of the data, $\R_A$ and $\R_A$(T) are the areal radii of $\partial M$ and $S_T$ respectively, $\R(T)=\R(S_T)$ is defined  by \eqref{R}, and 
\be\label{mTbh}
m_{T}:=\int_{\RA}^{\RA(T)}\int_{S_\xi} \left( \mu +\frac{q}{H}\nu^i j_i \right) dS d\xi 
\ee
where $\xi$ stands for the  areal radius coordinate.
\end{theorem}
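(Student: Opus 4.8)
The plan is to run the monotonicity argument of Theorem~\ref{theo1} but to split the inverse mean curvature flow at the time $T$ into a ``matter region'' $[0,T]$, where $S_t$ still meets the support of the sources, and a ``vacuum region'' $[T,\infty)$, where all the matter lies strictly inside $S_t$. The starting observation is that, since $\partial M$ is an apparent horizon, $H^2=q^2$ there, so $E_H(\partial M)=\sqrt{A/16\pi}=\R_A/2$; thus the term $\R_A/2$ in \eqref{teo2} is simply the initial value of the Hawking energy. In both regions, conditions \ref{a} or \ref{b} make the first-order term in \eqref{dtEgralf} vanish and the $\chi$ and $s_\theta$ quadratic forms non-negative (using $|q/H|\le 1$, which holds by \eqref{NTS}), exactly as in Theorem~\ref{theo1}, so these contributions can be discarded throughout.

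First I would treat the matter region. Here I retain only the matter integrand $\mu+\tfrac{q}{H}\nu^i j_i$, which is non-negative: by the DEC $\mu\ge|j|$, and since $|q/H|\le1$ and $|\nu|=1$ the Cauchy--Schwarz inequality gives $\tfrac{q}{H}\nu^i j_i\ge -|j|$, so $\mu+\tfrac{q}{H}\nu^i j_i\ge\mu-|j|\ge0$. Discarding the (also non-negative) $s_\varphi$ term, \eqref{dtEgralf} yields
\be
\frac{d}{dt}E_H\ge \frac{\R_A(t)}{2}\int_{S_t}\Big(\mu+\frac{q}{H}\nu^i j_i\Big)dS,
\ee
where I used $\frac{A_t^{1/2}}{(16\pi)^{3/2}}\cdot16\pi=\frac{A_t^{1/2}}{(16\pi)^{1/2}}=\frac{\R_A(t)}{2}$. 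Since the IMCF obeys $A_t=A_0e^{t}$, the areal radius satisfies $\xi:=\R_A(t)=\R_A\,e^{t/2}$, hence $dt=\tfrac{2}{\xi}d\xi$, and integrating from $0$ to $T$ converts the flow-time integral into the areal-radius integral $m_T$ of \eqref{mTbh}. This gives the matter-region estimate $E_H(S_T)\ge \R_A/2 + m_T$.

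Next I would treat the vacuum region. For $t\ge T$ the sources vanish on $S_t$, so the matter term drops out and the angular momentum is conserved, $J(S_t)=J$; the estimate is then exactly \eqref{dtEvsJ},
\be
\frac{d}{dt}E_H\ge\sqrt{4\pi}\,J^2\,\frac{A_t^{1/2}}{\int_{S_t}\eta\,dS}.
\ee
Integrating from $T$ to infinity, using $\lim_{t\to\infty}E_H=m_{ADM}$ together with the definition \eqref{R} of $\R(S_T)$, namely $\int_T^\infty \frac{A_t^{1/2}}{\int_{S_t}\eta dS}dt=\frac{1}{A_T^{1/2}\R(T)^2}$, and $A_T^{1/2}=2\sqrt{\pi}\,\R_A(T)$, I obtain $m_{ADM}-E_H(S_T)\ge \frac{J^2}{\R_A(T)\R(T)^2}$. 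Combining this with the matter-region bound yields $m_{ADM}\ge \R_A/2 + m_T + \frac{J^2}{\R_A(T)\R(T)^2}$, which is \eqref{teo2}.

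The individual computations are routine once the splitting is in place; the point that requires care is the bookkeeping at $t=T$. Specifically, one must check that the matter integrand stays non-negative across the \emph{entire} matter region, so that $E_H$ is genuinely monotone there and the discarded quadratic and $s_\varphi$ terms help rather than hurt the inequality; that the change of variables $dt=\tfrac{2}{\xi}d\xi$ reproduces exactly the areal-radius integral \eqref{mTbh} defining $m_T$; and that the size factors $\R_A(T)$ and $\R(T)$ emerging from \eqref{R} are those of $S_T$ and not of $\partial M$, which is precisely the reflection of the fact that $J(S_t)$ is only conserved once the flow has swept past the support of the sources.
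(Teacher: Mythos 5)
Your proposal is correct and follows essentially the same route as the paper: both rely on the lower bound $\frac{d}{dt}E_H\geq \sqrt{\tfrac{A_t}{16\pi}}\int_{S_t}(\mu+\tfrac{q}{H}\nu^i j_i)\,dS+\sqrt{4\pi}\,J_t^2\,\frac{A_t^{1/2}}{\int_{S_t}\eta\,dS}$, discard the angular-momentum contribution on $[0,T]$, use conservation of $J_t$ for $t\geq T$, and identify $E_H(S_0)=\R_A/2$; the only cosmetic difference is that you split the flow at $T$ before integrating while the paper integrates over $[0,\infty)$ and then splits the resulting integrals. Your explicit verification of the change of variables $dt=\tfrac{2}{\xi}\,d\xi$ producing $m_T$ is a welcome detail the paper leaves implicit.
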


\begin{proof}
From \eqref{dtEgralf} and the previous calculations we have:
\be\label{dtEvsJ2}
\frac{d}{dt}E_H\geq  \sqrt{\frac{A_t}{16\pi}} \int_{S_t} \left( \mu +\frac{q}{H}\nu^i j_i  \right) dS +\sqrt{4\pi} J_t^2 \frac{A_t^{1/2}}{\int_{S_t}\eta dS}
\ee
Note that because the matter fields satisfy the DEC, and $|\frac{q}{H}|\leq 1$ the fist term in \eqref{dtEvsJ2} is also positive.
Then, integrating this expression along the flow from $\partial M$ to infinity and using the relation between Hawking energy and the ADM mass we have:
\be
m_{ADM}\geq E_H(S_0) + \int_0^\infty \sqrt{\frac{A_t}{16\pi}} \int_{S_t} \left(  \mu +\frac{q}{H}\nu^i j_i\right) dS +\int_0^\infty \sqrt{4\pi} J_t^2 \frac{A_t^{1/2}}{\int_{S_t}\eta ds}dt 
\ee

Now because the matter fields have compact support inside $S_T$ the fist integral runs only from $0$ to $T$, then dividing the integral involving the angular momentum from $0$ to $T$ and from $T$ to infinity and using that $J_t=J \quad \forall t\geq T$ we obtain
\begin{multline}
m_{ADM} \geq E_H(S_0) + \int_0^T \sqrt{\frac{A_t}{16\pi}} \int_{S_t} \left( \mu +\frac{q}{H}\nu^i j_i  \right) dS dt \\ 
+ \sqrt{4\pi} J^2 \int_T^\infty \frac{A_t^{1/2}}{\int_{S_t}  \eta dS}dt + \sqrt{4\pi} \int_0^T J_t^2 \frac{A_t^{1/2}}{\int_{S_t}\eta dS} dt . 
\end{multline}
Finally disregarding the last term, using that $E_H(S_0)=\frac{\R_A}{2}$, and writing this expression in terms of $\R_A(S_T)$, $\R(S_T)$ and $m_T$ we obtain \eqref{teo2}.

\end{proof}

\textbf{Remarks} 

\vspace{0.5cm}
The notion of size we use, $\R$, albeit apparently artificial at first sight
have proved to be very useful to relate the angular momentum to the total mass in axially symmetric and asymptotically flat initial data \cite{Anglada:2017ryp}. It comes from the particular method we use to relate the angular momentum with the ADM
mass, and gives a good measure of how different the IMCF is from a spherical one.
These kind of measures based on the norm of
the Killing vector have been found to give an appropriate description of size of a
region when describing both regular objects and black holes with angular
momentum \cite{Anglada:2016dbu}, \cite{Reiris:2014tva}, \cite{Reiris:2013jaa},
\cite{Dain:2014}.

Assuming particular properties for the IMCF we can write \eqref{teo1} in terms of the usual measures of size.
The best situation is to have a spherical IMCF, thus $\R^2(\partial M)=\frac{A}{4\pi}$, in which case this proof implies the validity of \eqref{PenJ} in the vaccum case:
\be
\label{inqA}
m_{ADM}^2\geq \frac{A}{16 \pi} + \frac{4 \pi J^2}{A}.
\ee
In general we do not expect to have a spherical IMCF in the context we are considering.
For weaker conditions for the IMCF, for example assuming that the flow is convex, we obtain a weaker version of the Penrose inequality with angular momentum for vaccum initial data in terms of the axial radius of the apparent horizon $\R_C=\R_C(\partial M) $:
\be
 m_{ADM}^2\geq \frac{A}{16 \pi} + \frac{2}{5}\frac{ J^2}{\RC^2}.
\ee

\vspace{0.5cm}

If we consider a spherically symmetric initial data then the angular momentum is zero, and the presented result implies the validity of the usual Penrose inequality for apparent horizons in spherical symmetry that was already proved by Malec and O'Murchadha in \cite{Malec:1994}. It is important to mention that in \cite{Ben-Dov:2004} Ben-Dov presents an important counterexample to some formulation of the Penrose inequality in spherical symmetry, but these counterexamples do not contradict either the result of Malec and O'Murchadha or the present result. In \cite{Ben-Dov:2004} the author explicitly constructs an asymptotically flat initial data that contains an apparent horizon and violates the Penrose inequality, but this particular initial data contains past-trapped surfaces outside the apparent horizon, and thus it does not satisfy the hypothesis of theorem \ref{teo1}. From this results one can infer that, if we consider a future apparent horizon $\partial M$, the hypothesis that states that there are no other trapped surfaces in $M$ other than $\partial M$ is necessary to assure the validity of the Penrose inequality.

\vspace{0.5cm}

In the previous work \cite{Anglada:2017ryp} we consider $\partial M$ to be a minimal surface and we use the monotonicity property of Geroch energy. To assure that this monotonicity property is valid we need to have a non-negativity scalar curvature. In order to do this we need to assume some special conditions for the extrinsic curvature, one possible choice is to take $K$ such that for every surface of the flow $q|_{S_t}=0$. This condition, known as the polar gauge condition, together with the assumption that there are no other trapped surfaces in $M$ than $\partial M$, also assures that $\partial M$ is an outermost future apparent horizon. Note that if we assume the condition $q|_{S_t}=0$ then Geroch and Hawking energies are equal along the IMCF, and in this case the previous and present results are the same. 
The other possible condition, the most usual, is to consider that the initial data is maximal, $k=0$. With this condition we have a non-negative scalar curvature, and thus Geroch energy is monotonic, but in this case the a minimal surface is a future apparent horizon only if the extrinsic curvature also satisfies $q|_{\partial M}=z|_{\partial M}=0$.
In present work we study the case in which $\partial M$ is a general outermost future apparent horizon, in this case the Geroch energy of $\partial M$ is given by the area of the horizon plus a term that involves the surface integral of $q^2$. Thus Geroch energy is no longer useful to study the problem, we need to use Hawking energy and the maximal condition is not sufficient to assure its monotonicity. The main problem is that one need to control the sing of the second term in \eqref{dtEgralf} to assure that the derivate along the flow of Hawking energy is positive. One possible option, condition \ref{a}, is to control the behavior of $q$ along the IMCF, the other option, condition \ref{b}, is to control the $\nu, \eta$ components of the extrinsic curvature.
Although \ref{a} and \ref{b} are less general than the maximal condition, because both of them depend on a particular foliation of the initial data given by the IMCF, these conditions allows us to extend the previous result to the case in which $q|_{\partial M}\neq 0$. At the present time we do not have a clear geometrical or physical interpretation of these conditions and it will be interesting to study its meaning in detail.

\vspace{0.5cm}

Condition \ref{b} can be fulfilled by choosing a particular form for $s_i$. Note that this condition is a necessary condition to get the monotonicity of Hawking energy if we do not want to assume the very restricted condition \ref{a}. First if we take $s_i$ such that does not have any component on the $\theta$ direction, that is to say $s^i=s\eta^i$ condition \ref{b} can be written in the following way:
\be
\begin{split}
g^{ij} \bar \nabla_i s_j =& g^{ij} \eta_j \bar \nabla_i s + s g^{ij}  \bar \nabla_i \eta_j \\
                         =& \eta^i\bar \nabla_i s +  \frac{ s}{2} \left( g^{ij} \bar \nabla_i \eta_j -  g^{ij} \bar \nabla_j \eta_i \right) \\
                         =& \partial_\varphi s \\
\end{split}
\ee
where in the second step we use the Killing equation for $\eta^i$, $\bar \nabla_i \eta_j =- \bar \nabla_j \eta_i$. Then one of the possible choices to get condition \ref{b} is to assume that $s^i=s\eta^i$ and that $s$ does not depend on the coordinate $\varphi$.

Is important to note that if we assume that $s_i=0$, conditions \ref{a} and \ref{b} are not necessary to get the monotonicity of Hawking energy, but in this case we do not have angular momentum. Take an asymptotically flat and axially symmetric initial data that do not have any other trapped surface than $\partial M$, satisfy the DEC and have $s_i=0$. Then for this initial data the existence of a smooth solution of the IMCF is the only necessary condition one needs to prove the positivity of the derivative of Hawking energy along the flow, and thus the only necessary condition one needs to prove the Penrose inequality \eqref{penroseineq}. Then one can infer that the angular momentum generates difficulties in obtaining a foliation of $M$ for which it can be assured that Hawking energy is monotonically increasing.

\vspace{0.5cm}
In respect to the assumption of existence of a smooth solution of the IMCF, the conditions we assume to assure that Hawking energy is monotonic will probably not be fulfilled for a weak flow. Moreover the method we use to relate the angular momentum with the energy strongly depends on having a smooth IMCF.
In order to adapt the present formulation to a weak formulation of the flow one need to control the behavior of all the terms involved in equation on each discontinuity of the flow, and we do not know if it is possible to control the terms involving $q$. In this sense we think the method presented in the previous work \cite{Anglada:2017ryp} has better chances to be adapted to a weak formulation of the IMCF   

\section*{Acknowledgments}
I thank Edward Malec and Marc Mars for enlightening discussions.
I would also like to thank Maria E. Gabach-Clement and Omar E. Ortiz for their encouraging support, and I thank the anonymous referees for their helpful and detailed reviews.
This work was partially supported by grants from CONICET and
SECyT, UNC.

\appendix

\section{Monotonicity of Hawking energy along the IMCF}\label{sec:IMCF}

Since it is relevant for the prove of our main theorem, in this section we will review the proof of the monotonicity property of Hawking energy obtained by Malec, Mars and Simon in \cite{Malec:2002ki}.

Let $(M, \partial M,\bar g, K; \mu, j^i)$ be a asymptotically flat and axially symmetric initial data with boundary. Assume there exists a smooth inverse mean curvature flow (IMCF) of surfaces $S_t$ starting from $S_{0}=\partial M$ and having spherical topology. Then we have a smooth family of hypersurfaces  $S_t:=x(S,t)$ on $M$, with $x:S\times[0,\tau]\to M$ satisfying the evolution equation
\be \label{eqIMCF}
\frac{\partial x}{\partial t}=\frac{\nu}{H}
\ee
where $t\in[0,\tau]$, $H>0$ is the mean curvature of the 2-surface $S_t$ at $x$ and $\nu$ is the outward unit normal to $S_t$.
Let  $\nabla_i$ be the covariant derivative, $h_{ij}$ the second fundamental form and $dS$ the area element of $S_t$.  Then one can derive the evolution equations, see \cite{Huiskenevol}, \cite{Szabados04}:
\be\label{evolg}
\frac{\partial}{\partial t}g_{ij}=\frac{2}{H}h_{ij}
\ee
\be\label{evolarea}
\frac{\partial}{\partial t}(dS)= dS
\ee
\be\label{evolH}
\frac{\partial}{\partial t}H=-\Delta(H^{-1})-H^{-1}(h_{ij}h^{ij}+\bar R_{ij}\nu^i\nu^j).
\ee

Using the decomposition of $\bar g$ and $K$ presented in section \ref{Background} we now calculate the derivate of Hawking:
\be\label{dtEgral}
\begin{split}
\frac{d}{dt}E_H&=\frac{A_t^{1/2}}{(16\pi)^{3/2}}\left[8\pi- \frac{1}{2}\int_{S_t}(H^2-q^2) dS \right] \\
&-  \frac{A_t^{1/2}}{(16\pi)^{3/2}} \int_{S_t}\left(2H\frac{dH}{dt} -2q\frac{dq}{dt} + (H^2-q^2)\right) dS \\
\end{split}
\ee

First we calculate the derivate of $H$ along the flow, we refer the reader to \cite{Huisken1997} and \cite{Malec:2002ki} for details, proofs and further references. From \eqref{evolH} we have:
\be
2H\frac{dH}{dt}=-2H\Delta(H^{-1})-2h_{ij}h^{ij}+ 2\bar R_{ij}\nu^i\nu^j).
\ee
then we use the Gauss equation
\be
2\bar R_{ij}\nu^i\nu^j=\bar R+H^2-h_{ij}h^{ij}-2\kappa
\ee
where $\kappa$ is the Gauss curvature, and we obtain:
\be
2H\frac{dH}{dt}=-2H\Delta(H^{-1})-h_{ij}h^{ij} -\bar R -H^2 +2\kappa.
\ee
Now let $t_{ij}$ be the trace free part of $h_{ij}$:
\be\label{t}
t_{ij}=h_{ij}-\frac{H}{2}g_{ij}
\ee
then
\be
h_{ij}h^{ij}=t_{ij}t^{ij} + \frac{H^2}{2}
\ee
hence using this and equation \eqref{barR} we obtain:
\be\label{dtH}
\begin{split}
2H\frac{dH}{dt}=&  +2\kappa + 2zq  + \frac{q^2}{2} -2H\Delta(H^{-1}) - \frac{3}{2}H^2\\
& -16 \pi \mu  -2s_i s^i - \chi_{ij} \chi^{ij}  - t_{ij} t^{ij} 
\end{split}
\ee

For the derivate of $q$ along the flow we first calculate the covariant derivate in the direction of $\nu^i$, and then using the vector constraint \eqref{const2} we obtain:
\be\label{dtqgral}
\begin{split}
H\frac{dq}{dt} &= \nu^i \bar \nabla_i \left(\mbox{tr}K - z\right)=\nu^i \bar \nabla_j K^j_i + 8\pi \nu^i j_i - \nu^i \bar \nabla_i z\\
               &=  \bar \nabla_j \left( \nu^i  K^j_i \right) -  K^j_i \bar \nabla_j  \nu^i    + 8\pi \nu^i j_i - \nu^i \bar \nabla_i z\\
               &=  \bar \nabla_j \left( z\nu^j + s^j \right) -  K^j_i \bar \nabla_j \nu^i  + 8\pi \nu^i j_i - \nu^i \bar \nabla_i z\\
               &=  z\bar \nabla_j \nu^j + \bar \nabla_j s^j -  K^j_i \bar \nabla_j \nu^i   + 8\pi \nu^i j_i \\
               &=  zH + \bar \nabla_i s^i -  K^j_i \bar \nabla_j \nu^i   + 8\pi \nu^i j_i \\
\end{split}
\ee
One can write the term $\bar \nabla_i s^i$ in the form:
\be\label{dsi}
\begin{split}
 \bar \nabla_i s^i&= \bar g^{ij} \bar \nabla_i s_j= g^{ij} \bar \nabla_i s_j + \nu^i \nu^j \bar\nabla_i s_j\\
                  &=g^{ij} \bar \nabla_i s_j -  s_j \nu^i  \bar \nabla_i \nu^j \\
                  &=g^{ij} \bar \nabla_i s_j +  H s^j  \bar \nabla_j \frac{1}{H} \\
\end{split}
\ee
where in the last step we use the fact that $\nu^i  \bar \nabla_i \nu_j= - H g^i_j \bar \nabla_i \frac{1}{H}$, see for example \cite{Huiskenevol}. And the term $K^i_j \bar \nabla_i \nu^i$ is:
\be\label{Kh}
\begin{split}
 K^{ij} \bar \nabla_i \nu_i&= z \nu^i \nu^j \bar \nabla_i \nu_j + \left( s^i\nu^j + \nu^i s^j\right)\bar \nabla_i \nu_j + K^{ij}h_{ij} \\
                           &= z \nu^i \nu^j \bar \nabla_i \nu_j + \left( s^i\nu^j + \nu^i s^j\right)\bar \nabla_i \nu_j + \chi^{ij} t_{ij} + \frac{Hq}{2} \\
                           &= - H s^j \bar \nabla_j \frac{1}{H} + \chi^{ij} t_{ij} + \frac{Hq}{2} \\
\end{split}
\ee
where in the last step we use  $s^i\nu^j  \bar \nabla_i \nu_j= \frac{1}{2} s^i \bar \nabla_i (\nu^j \nu_j)=0 $ and equation \eqref{eqIMCF}.

Then using \eqref{dsi} and \eqref{Kh} in \eqref{dtqgral}  we obtain:
\be \label{dtq}
H\frac{dq}{dt}=  8\pi \nu^i j_i+ zH - \frac{Hq}{2} + g^{ij} \bar \nabla_i s_j + 2 H s^j  \bar \nabla_j \frac{1}{H} -  \chi^{ij} t_{ij}  \\
\ee

Now using  this, the integrand on the second term of \eqref{dtEgral} is:
\be\label{dtinEH}
\begin{split}
2H\frac{dH}{dt}& -2q\frac{dq}{dt} + (H^2-q^2) = \\
               =&  2\kappa + 2zq  - \frac{q^2}{2} -2H\Delta(H^{-1}) - \frac{3}{2}H^2 -16 \pi \mu  -2s_i s^i - \chi_{ij} \chi^{ij}  - t_{ij} t^{ij} \\
               &-\frac{q}{H}16\pi \nu^i j_i - 2zq + q^2 - 2 \frac{q}{H} g^{ij} \bar \nabla_i s_j - 4 q s^j  \bar \nabla_j \frac{1}{H} +  2\frac{q}{H}\chi^{ij} t_{ij} + (H^2-q^2) \\
              =&  2\kappa  - \frac{1}{2}(H^2-q^2) -2H\Delta(H^{-1})  -16 \pi \mu  -2s_i s^i - \chi_{ij} \chi^{ij}  - t_{ij} t^{ij} \\
               &-\frac{q}{H}16\pi \nu^i j_i - 2 \frac{q}{H} g^{ij} \bar \nabla_i s_j - 4 q s^j  \bar \nabla_j \frac{1}{H} +  2\frac{q}{H}\chi^{ij} t_{ij} \\
              =& 2\kappa  - \frac{1}{2}(H^2-q^2)  -16 \pi ( \mu +\frac{q}{H}\nu^i j_i) \\
              &- \left(\chi_{ij} \chi^{ij} -  2\frac{q}{H}\chi^{ij} t_{ij}  + t_{ij}   t^{ij}\right) \\ 
              &-2s_i s^i - 4 q s^j  \bar \nabla_j \frac{1}{H}  -2H\Delta(H^{-1}) - 2 \frac{q}{H} g^{ij} \bar \nabla_i s_j\\
\end{split}
\ee

Next, we incorporate the previous expression on the derivate of Hawking energy \eqref{dtEgral}, we use the Gauss-Bonnet theorem and integrate by parts the Laplace operator:
\begin{multline}
\frac{d}{dt}E_H=\frac{A_t^{1/2}}{(16\pi)^{3/2}}\left[8\pi-4\pi\chi(S_t)  +  \int_{S_t} \left(16 \pi ( \mu +\frac{q}{H}\nu^i j_i)  \right)dS 
\right.\\
\left.       + \int_{S_t} 2 \frac{q}{H} g^{ij} \bar \nabla_i s_j dS       +\int_{S_t} \left(\chi_{ij} \chi^{ij} -  2\frac{q}{H}\chi^{ij} t_{ij}  + t_{ij}   t^{ij}\right)dS \right.\\
\left.              +\int_{S_t} 2\left(s_i s^i - 2 \frac{q}{H} s^j \frac{ \bar \nabla_j H}{H} + \frac{ g^{ij} \bar \nabla_i H \bar \nabla_j H}{H^2} \right)dS \right]
\end{multline}

Finally because $S_t$ is assumed to have spherical topology we obtain \eqref{dtEgralf}, and thus assuming either of the conditions, \ref{a} or \ref{b}, presented in Theorem \ref{theo1} we have that Hawking energy is monotonic along the IMCF.

\end{document}